\documentclass[pdflatex,sn-mathphys-num]{sn-jnl}

\usepackage{graphicx}%
\usepackage{multirow}%
\usepackage{amsmath,amssymb,amsfonts}%
\usepackage{amsthm}%
\usepackage[title]{appendix}%
\usepackage{xcolor}%
\usepackage{textcomp}%
\usepackage{manyfoot}%
\usepackage{booktabs}%
\usepackage{algorithm}%
\usepackage{algorithmicx}%
\usepackage{algpseudocode}%
\usepackage{listings}%

\theoremstyle{thmstyleone}%
\newtheorem{theorem}{Theorem}
\theoremstyle{thmstyletwo}%

\theoremstyle{thmstylethree}%
\newtheorem{definition}{Definition}%

\usepackage{physics}
\usepackage{qcircuit}
\usepackage{tikz}
\usepackage{array}
\usepackage{longtable}

\newcommand{\hu}{\hat{U}}

\newcommand{\dagg}[1]{#1^{\dagger}}

\newtheorem{lemma}{Lemma}

\begin{document}

\title[Article Title]{Unconditionally successful quantum Time-Marching algorithm via LCU for nonlinear Burgers' equation}


\author*[1,2]{\fnm{Niccol\'o} \sur{Fonio}}\email{niccol.fonio11@gmail.com}

\author[2,3]{\fnm{Giuseppe} \sur{Di Molfetta}}\email{giuseppe.dimolfetta@lis-lab.fr}

\author[1]{\fnm{Pierre} \sur{Sagaut}}\email{pierre.sagaut@univ-amu.fr}

\affil[1]{\orgdiv{Aix-Marseille Université}, \orgname{CNRS, Centrale Méditerranée, M2P2}, \orgaddress{\city{Marseille}, \country{France}}}

\affil[2]{\orgdiv{Aix-Marseille Université}, \orgname{CNRS, LIS}, \orgaddress{\city{Marseille}, \country{France}}}

\affil[3]{\orgname{Institute universitaire de France}, \orgaddress{\city{Paris}, \country{France}}}



\abstract{Most recently proposed quantum algorithms for solving linear and nonlinear partial differential equations rely on non-unitary operations. These operations are typically implemented probabilistically, requiring postselection and thus increasing the computational cost. We show that quantum lattice gas algorithms enable unconditionally successful quantum simulation of nonlinearities, yielding, to our knowledge, the first quantum algorithm for Burgers' equation whose time steps can be concatenated without probabilistic failure. The key idea is to exploit the correspondence between the stochasticity of quantum measurement in the linear combination of unitaries framework and the intrinsic randomness of the classical lattice gas algorithm. In doing so, we identify general properties that characterize probabilistic classical algorithms amenable to this time-marching formulation, and illustrate the approach with an additional application.}


\keywords{nonlinearities, LCU, lattice gas algorithm}



\maketitle

\section{Introduction}\label{sec1}
Quantum algorithms for partial differential equations (PDEs) have attracted considerable attention because of their potential computational advantages in scientific computing. Significant progress has been achieved for linear PDEs through techniques based on Hamiltonian simulation, linear combination of unitaries (LCU), and, more recently, quantum singular value transformation (QSVT) \cite{harrow2009quantum,childs2012hamiltonian,berry2014exponential,low2019hamiltonian,gilyen2019quantum}. These developments have also motivated quantum algorithms for nonlinear PDEs by combining efficient linear solvers with linearization techniques such as Carleman linearization, Koopman--von Neumann linearization, homotopy analysis and Schrödingerization \cite{liu2021efficient,jin2022quantum,hu2024quantum,novikau2025quantum,gan2025provably,jennings2025quantum,bharadwaj2025compact,bharadwaj2025quantum}. Nevertheless, the additional approximation introduced by the linearization itself may ultimately limit the effectiveness of these approaches for nonlinear dynamics \cite{lewis2024limitations, jennings2025end}. Consequently, alternative formulations that simulate nonlinear equations directly remain an active research direction, including quantum lattice gas algorithms and related kinetic approaches \cite{budinski2021quantum,schalkers2024importance,wang2025quantum, zamora2025efficient, fonio2026two}.

Among the different paradigms for quantum PDE solvers, time-marching (TM) algorithms constitute a particularly attractive class. Rather than constructing the solution globally, they reproduce the classical numerical scheme by repeatedly applying an evolution operator corresponding to a single time step. For nonlinear equations, this evolution is generally non-unitary, making its efficient quantum implementation one of the central challenges in the design of TM algorithms.

Non-unitary operators are commonly implemented through block encoding, QSVT, or related LCU-based techniques. Although these methods provide powerful algorithmic tools, they generally implement the desired evolution only probabilistically, requiring postselection or amplitude amplification. When the same non-unitary operator must be applied over many time steps, the overall success probability decreases exponentially unless additional amplification procedures are introduced, increasing both circuit depth and resource requirements. Several strategies have been proposed to mitigate this issue, including oblivious amplitude amplification \cite{zecchi2025improved}, uniform singular value amplification \cite{fang2023time}, block encoding of the complete time evolution \cite{over2025quantum,he2026time}, Hamiltonian-simulation approaches \cite{brearley2024quantum}, and recent time-marching formulations for linear lattice Boltzmann methods \cite{wawrzyniak2025linearized,bediche2025fully,fonio2025adaptive}. However, these approaches either remain restricted to linear dynamics or still rely on probabilistic implementations of the non-unitary evolution.

To the best of our knowledge, no quantum time-marching algorithm has yet demonstrated unconditional concatenation of successive nonlinear time steps. Establishing such a formulation would remove one of the principal bottlenecks of repeated non-unitary evolution and considerably broaden the applicability of quantum time-marching methods.

In this work, we address this question by constructing a quantum time-marching algorithm for Burgers' equation based on a quantum lattice gas cellular automaton. The key observation is that the intrinsic stochasticity of the classical lattice gas algorithm naturally matches the probabilistic character of quantum measurement within the LCU framework, allowing successive time steps to be concatenated without loss of success probability. More generally, we derive the conditions under which this correspondence can be exploited and formulate design principles for identifying other probabilistic classical algorithms amenable to the same approach. As an illustration of these principles, we additionally prove that a naive randomly sampled finite-difference discretization of the advection equation cannot satisfy these conditions with a single ancilla under amplitude encoding.

Rather than focusing on asymptotic quantum computational advantage over classical solvers, the present work introduces a new algorithmic principle for quantum time-marching methods. By showing that repeated non-unitary evolution can be implemented unconditionally through stochastic classical dynamics, we identify a framework that may guide the development of future quantum algorithms for nonlinear PDEs.

To place the proposed method within the current landscape of quantum PDE solvers, Table \ref{tab:comparison} compares the principal algorithmic approaches according to their numerical formulation, applicability to linear and nonlinear equations, implementation of non-unitary evolution, and resource requirements. Since the computational cost of quantum PDE solvers depends strongly on the adopted encoding, initialization procedure, and observable extraction, the reported complexities should be interpreted as representative scaling laws under the assumptions discussed in the corresponding references. 

\begin{table}[ht]
\centering
\begingroup
\renewcommand{\thefootnote}{\alph{footnote}}
 \begin{tabular}{||m{0.5cm} c c c | c | c | m{1.5cm}||} 
 \hline
 Ref.  & Qubits & Depth & Overall prob & Linear & Non-Lin & Eqs. \\ [0.5ex] 
 \hline\hline
 \cite{over2025quantum} & $O(\log N)$ & $\tilde{O}(TN^2/\epsilon)$ & $O( \frac{||\phi(0)||^2}{||\phi(T)||^2})$ & $\times$ &  & AD \\
 \hline
 \cite{brearley2024quantum} & $O(\log N)$ & $O(TNk/\epsilon)$ & $\approx 1$\footnotemark[1] & $\times$ & & Advection \\
 \hline
 \cite{bediche2025fully} & $O(\log N)$ & $O(T\log^2N)$ & $p$\footnotemark[2] & $\times$ & & AD (LBM) \\
 \hline
 \cite{wawrzyniak2025linearized} & $O(\log N)$ & $O(T\log ^2N)$ & 1 & $\times$ &  & AD (LBM) \\
 \hline
 \cite{esmaeilifar2024quantum} \cite{koukoutsis2025time} & $O(T)$ & $O(3^T)$ & $O(p^{3^T})$ & & $\times$ & Burgers Lorenz \\
 \hline
 \cite{fonio2025adaptive} & $\log (N)+T$ & $O(T\log(N))$ & $p^T$ & & $\times$ & Other \\
 \hline
 \cite{wang2025quantum} & $O(\log (N)+m)$ & $O(m^{T}\log N)$ & 1\footnotemark[3] & & $\times$ & LBM \\
 \hline
 \cite{schalkers2024importance} & $\Omega(mT),O(mN)$ & $O(T^2)$ & 1\footnotemark[3] & & $\times$ & LBM \\
 \hline
 \textbf{This work} & \textbf{$mN$} & \textbf{$O(T)$} & \textbf{1} & & $\times$  & \textbf{Burgers}\\ [1ex] 
 \hline
 \end{tabular}
 \renewcommand{\thefootnote}{\alph{footnote}}
 \footnotetext[a]{For the precise probability, refer to the original paper}
 \footnotetext[b]{A trash state is created, but with a probability independent on the number of timesteps}
 \footnotetext[c]{Nonlinearities introduce a probability of success with LCU or QSVT}
 \caption{Comparison of representative quantum algorithms for partial differential equations. Reported complexities correspond to the assumptions adopted in the respective references.}
 \label{tab:comparison}
\endgroup
\end{table}
The remainder of this article is organized as follows. Section~\ref{sec:frame} reviews the LCU framework and derives the conditions required for probabilistic non-unitary evolution. Section~\ref{sec:appli} applies this framework to the design of the collision step of a lattice gas cellular automaton, demonstrating that time-step concatenation is possible within the LCU formalism. The same framework is then applied to a probabilistic finite-difference method for the advection equation using an amplitude encoding compatible with potential quantum advantage. In this case, however, we show that the proposed encoding does not allow time-step concatenation.

\section{Methods}\label{sec:frame}

In the introduction, we argued that the main bottleneck of TM algorithms is repeated probabilistic application of non-unitary operators. Before constructing a Burgers solver, we first ask a more general question: which non-unitary operators can be realized by an LCU? The following framework answers this question independently of the particular PDE.
In the first place, we give the following definition
\begin{definition} [Probabilistically applied operators]
    A set of operators $\{O_m\}\in\mathbb{C}^{2^n\times 2^n}$ is probabilistically applied to a pure state of $n$ qubits $\ket{\psi}\in\mathcal{H}^{2^n}$ if the state evolves to
    \begin{equation}
        \ket{\psi'}=\frac{1}{\sqrt{p_m}}O_m\ket{\psi}
    \end{equation}
    with probability
    \begin{equation}
        p_m = \bra{\psi}O_m^{\dagger}O_m\ket{\psi}
    \end{equation}
    having
    $$\sum_m p_m=1,$$
    equivalently,
    $$\sum_m O_m^{\dagger}O_m=1$$
\end{definition}
We remark that the operators being probabilistically applied are in general non-unitary. In fact, with this definition we have that
\begin{equation}
    \sum_mO_m^{\dagger}O_m=I,
\end{equation}
and not the unitarity of $O_m$.Then, we give the following definitions
\begin{definition}
    A linear combination of unitaries (LCU) algorithm is a quantum algorithm involving $r$ ancillas and $n$ target qubits, carried out by the unitary operator $\hat{L}$ acting on $\mathcal{H}^{2^r}\otimes \mathcal{H}^{2^n}$ where
    \begin{equation}
        \hat{L} = \qty(\hat{H}_1 \otimes \hat{I}^{\otimes n}) \hat{W} \qty(\hat{H}_0 \otimes \hat{I}^{\otimes n})
    \end{equation}
    with
    $$\hat{W}=\sum_{s=0}^{2^r-1} \Pi_s \otimes \hat{U}_s$$
    having a set of $2^r$ projectors $\{\Pi_s\}$, a set of $2^r$ unitary operations $\{\hat{U}_s\}$ acting on the target qubits, a unitary operation $\hat{H}_0$ called "preparation" and a unitary operation $\hat{H}_1$ called "un-preparation" acting on ancillary qubits. The circuit representation is given in Fig.\ref{fig:LCU_4}
\end{definition}
\begin{figure}[ht]
    \centering
    \includegraphics[scale=1]{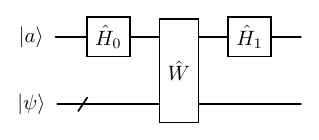}
    \caption{General LCU quantum circuit}
    \label{fig:LCU_4}
\end{figure}
\begin{definition}[LCU-conjugated operators]
    Let $\hat{L}$ be the unitary operator of a LCU acting on $\mathcal{H}^{anc}\otimes\mathcal{H}^{targs}$. Suppose we measure the ancillary register in the computational basis. For each possible ancilla measurement outcome $i$, we define the LCU-conjugated operator as 
    \begin{equation}
        C_i = \bra{i}\hat{L}\ket{0^{\otimes r}}_{anc}
    \end{equation}
\end{definition}
These are then the operators acting only on the target register, supposing a measurement occurred on the ancilla register. We can then prove the following lemma
\begin{lemma}
    Let $\hat{L}$ be the unitary operator of a LCU. Suppose we measure the ancillary register in the computational basis. The set of related LCU-conjugated operators $\{C_i \}$ is probabilistically applied to the target register.
\end{lemma}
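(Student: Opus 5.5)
We need to check both conditions in the definition of probabilistically applied operators for the family $\{C_i\}$, where $C_i=\bra{i}\hat{L}\ket{0^{\otimes r}}_{anc}$ is an operator on the $n$ target qubits. The proof is just the Born rule applied to the ancilla measurement, plus unitarity of $\hat{L}$.

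\emph{Decomposition of the output state.} Take an arbitrary pure target state $\ket{\psi}$ and prepare $\ket{0^{\otimes r}}\otimes\ket{\psi}$. Insert the resolution of identity $\sum_{i=0}^{2^r-1}\ket{i}\bra{i}\otimes \hat{I}^{\otimes n}$ on the ancilla register to write
\begin{equation}
\hat{L}\qty(\ket{0^{\otimes r}}\otimes\ket{\psi})=\sum_{i}\ket{i}\otimes C_i\ket{\psi}.
\end{equation}
This identity is just the definition of $C_i$ as a partial matrix element. Nothing about the inner structure of $\hat{L}$ (the maps $\hat{H}_0$, $\hat{W}$, $\hat{H}_1$) is needed; only its unitarity matters.

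\emph{Probabilities and post-measurement states.} Measuring the ancilla in the computational basis means applying the projector $\ket{i}\bra{i}\otimes \hat{I}^{\otimes n}$. By the Born rule, outcome $i$ occurs with probability
\begin{equation}
p_i=\norm{C_i\ket{\psi}}^2=\bra{\psi}C_i^{\dagger}C_i\ket{\psi}.
\end{equation}
The normalized post-measurement state is $\ket{i}\otimes C_i\ket{\psi}/\sqrt{p_i}$. Discarding the now-classical ancilla $\ket{i}$, the target is left in $\frac{1}{\sqrt{p_i}}C_i\ket{\psi}$. This is exactly the form required by the definition.

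\emph{Completeness.} It remains to show $\sum_i C_i^{\dagger}C_i=\hat{I}$. Write
\begin{equation}
\sum_i C_i^{\dagger}C_i=\bra{0^{\otimes r}}\hat{L}^{\dagger}\qty(\sum_i\ket{i}\bra{i}\otimes \hat{I})\hat{L}\ket{0^{\otimes r}}=\bra{0^{\otimes r}}\hat{L}^{\dagger}\hat{L}\ket{0^{\otimes r}}=\hat{I}^{\otimes n},
\end{equation}
using the completeness of the computational basis and then $\hat{L}^{\dagger}\hat{L}=\hat{I}$. It follows that $\sum_i p_i=1$ for every $\ket{\psi}$.

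The only point needing care is the middle step, $C_i^{\dagger}=\bra{0^{\otimes r}}\hat{L}^{\dagger}\ket{i}$ as partial inner products. This should be justified by expanding $\hat{L}$ in a product basis. I expect no genuine obstacle beyond this bookkeeping.
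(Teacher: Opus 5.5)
Your proposal is correct and takes the same approach as the paper, which proves the lemma in one line by appeal to the measurement postulate (Born rule). You additionally write out the completeness relation $\sum_i C_i^{\dagger}C_i=\bra{0^{\otimes r}}\hat{L}^{\dagger}\hat{L}\ket{0^{\otimes r}}=\hat{I}$, which follows from the unitarity of $\hat{L}$ and which the paper leaves implicit.
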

This is a direct consequence of the measurement postulate of quantum mechanics \cite{nielsen2010quantum}.

We now consider a LCU with 1 ancilla. We represent the one-qubit preparation and unpreparation with unitaries parametrized as follows
\begin{equation}
    \hu(\theta,\zeta,\xi) = \begin{pmatrix}
        e^{-i(\zeta+\xi)}\cos{\theta}  & -e^{-i(\zeta-\xi)}\sin{\theta} \\
        e^{i(\zeta-\xi)}\sin{\theta} & e^{i(\zeta+\xi)}\cos{\theta}
    \end{pmatrix}
\end{equation}
where we neglect a global phase, considering $\theta,\zeta,\xi$ to be real angles. The specific quantum circuit of the algorithm is represented in Fig.\ref{fig:LCU_3}.
\begin{figure}[ht]
\centering
\includegraphics[scale=1]{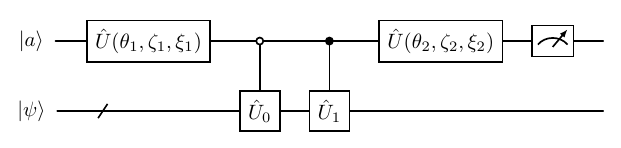}
\caption{Standard linear combination of unitaries.}
\label{fig:LCU_3}
\end{figure}
In this case, $\ket{\psi}$ is our target state, $\ket{a}$ is an ancilla supposed to be prepared in state $\ket{0}$, $\hat{U}(\theta_1,\zeta_1,\xi_1)$ is the unitary that prepares the superposition state of the ancilla, $\hat{U}_0$ and $\hat{U}_1$ are general unitary operations acting on the target qubits, $\hat{U}(\theta_2,\zeta_2,\xi_2)$ is the unitary that prepares the ancilla before the measurement. The final state before the measurement results in 
\begin{equation}
    \begin{split}
        \ket{\Psi} & = \ket{0}(e^{-i(\Delta_1^++\Delta_2^+)}\cos\theta_1\cos\theta_2 \hu_0 - e^{i(\Delta_1^--\Delta_2^-)}\sin\theta_1\sin\theta_2 \hu_1)\ket{\psi} + \\
        & + \ket{1}(e^{-i(\Delta_1^+-\Delta_2^-)}\cos\theta_1\sin\theta_2 \hu_0 + e^{i(\Delta_1^-+\Delta_2^+)}\sin\theta_1\cos\theta_2 \hu_1)\ket{\psi}
    \end{split}
\end{equation}
where $\Delta_i^{\pm}=\zeta_i \pm \xi_i$. In this way, the probabilities for a projective measurement of the ancilla, represented with projective operators  $M_i=\ket{i}\bra{i}\otimes \hat{I}^{\otimes n}$ with $i=0,1$, are
\begin{equation}
    p_m= \bra{\Psi}M_m^{\dagger}M_m\ket{\Psi},
\end{equation}
which are explicitly
\begin{equation} \label{eq:probs}
    \begin{split}
        p_0 & = c_1^2c_2^2 + s_1^2s_2^2 - c_1s_1c_2s_2 \bra{\psi}\hat{U}_{01} + \dagg{\hat{U}_{01}}\ket{\psi} \\
        p_1 & = c_1^2s_2^2 + s_1^2c_2^2 + c_1s_1c_2s_2 \bra{\psi}\hat{U}_{01} + \dagg{\hat{U}_{01}}\ket{\psi}
    \end{split}
\end{equation}
where $c_j=\cos(\theta_j)$, $s_j=\sin(\theta_j)$, and
\begin{equation}
    \hat{U}_{01} = e^{i2(\zeta_1+\xi_2)}\dagg{\hat{U}}_0\hat{U}_1
\end{equation}
The quantum state after the measurement will either be
\begin{equation} \label{eq: post-meas}
    \begin{split}
    \ket{\Psi_0} = \frac{1}{\sqrt{p_0}} \ket{0}(e^{-i(\Delta_1^++\Delta_2^+)}\cos\theta_1\cos\theta_2 \hu_0 - e^{i(\Delta_1^--\Delta_2^-)}\sin\theta_1\sin\theta_2 \hu_1)\ket{\psi}  \\
    \ket{\Psi_1} = \frac{1}{\sqrt{p_1}} \ket{1}(e^{-i(\Delta_1^+-\Delta_2^-)}\cos\theta_1\sin\theta_2 \hu_0 + e^{i(\Delta_1^-+\Delta_2^+)}\sin\theta_1\cos\theta_2 \hu_1)\ket{\psi}
    \end{split}
\end{equation}
depending on the outcome of the measurement. This allows us to clearly identify the non-unitary operations we are carrying out. These correspond to
\begin{align}
    A_0 & = \qty[ e^{-i(\Delta_1^++\Delta_2^+)}\cos\theta_1\cos\theta_2 \hu_0 - e^{i(\Delta_1^--\Delta_2^-)}\sin\theta_1\sin\theta_2 \hu_1 ] \label{eq:nu_condition_1}\\
    A_1 & = \qty[e^{-i(\Delta_1^+-\Delta_2^-)}\cos\theta_1\sin\theta_2 \hu_0 + e^{i(\Delta_1^-+\Delta_2^+)}\sin\theta_1\cos\theta_2 \hu_1] \label{eq:nu_condition_2}
\end{align}
Thus, we can say that if we measure the ancilla in the state $\ket{i}$, the state at the end of the LCU is
\begin{equation}
    \ket{\Psi'}=\frac{1}{\sqrt{p_i}} \ket{i} \otimes A_i \ket{\psi}
\end{equation}
The operators $A_0$ and $A_1$ are LCU-conjugated via the circuit in Fig.\ref{fig:LCU_3}.
These calculations can then be summarized in the following lemma
\begin{lemma}
    The LCU-conjugated operators via the circuit in Fig.\ref{fig:LCU_3} result in
    \begin{align}
        A_0 & = \qty[ e^{-i(\Delta_1^++\Delta_2^+)}\cos\theta_1\cos\theta_2 \hu_0 - e^{i(\Delta_1^--\Delta_2^-)}\sin\theta_1\sin\theta_2 \hu_1 ] \\
    A_1 & = \qty[e^{-i(\Delta_1^+-\Delta_2^-)}\cos\theta_1\sin\theta_2 \hu_0 + e^{i(\Delta_1^-+\Delta_2^+)}\sin\theta_1\cos\theta_2 \hu_1] 
    \end{align}
    and form a probabilistically applied set of operators, with respective probabilities
    \begin{align*}
        p_0 & = c_1^2c_2^2 + s_1^2s_2^2 - c_1s_1c_2s_2 \bra{\psi}\hat{U}_{01} + \dagg{\hat{U}_{01}}\ket{\psi} \\
        p_1 & = c_1^2s_2^2 + s_1^2c_2^2 + c_1s_1c_2s_2 \bra{\psi}\hat{U}_{01} + \dagg{\hat{U}_{01}}\ket{\psi}
    \end{align*}
    where $c_j=\cos(\theta_j)$, $s_j=\sin(\theta_j)$, and
    \begin{equation*}
        \hat{U}_{01} = e^{i2(\zeta_1+\xi_2)}\dagg{\hat{U}}_0\hat{U}_1
    \end{equation*}
\end{lemma}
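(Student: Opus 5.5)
The plan is a direct computation of $\hat{L}\ket{0}\ket{\psi}$ for the circuit in Fig.~\ref{fig:LCU_3}, followed by reading off the conjugated operators and invoking the previous lemma. I take $\hat{H}_0=\hu(\theta_1,\zeta_1,\xi_1)$, $\hat{H}_1=\hu(\theta_2,\zeta_2,\xi_2)$ and $\hat{W}=\ket{0}\bra{0}\otimes\hu_0+\ket{1}\bra{1}\otimes\hu_1$.

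\textbf{Step 1: apply the circuit.} Reading off the first column of the parametrized matrix gives
\[
\hu(\theta_1,\zeta_1,\xi_1)\ket{0}=e^{-i\Delta_1^+}c_1\ket{0}+e^{i\Delta_1^-}s_1\ket{1}.
\]
Applying $\hat{W}$ attaches $\hu_0$ to the first branch and $\hu_1$ to the second. Then I expand $\hu(\theta_2,\zeta_2,\xi_2)\ket{0}$ and $\hu(\theta_2,\zeta_2,\xi_2)\ket{1}$ from the two columns of the matrix, and collect the coefficients of $\ket{0}$ and $\ket{1}$. This reproduces the expression for $\ket{\Psi}$, using phase identities such as $-\xi_1-\zeta_1-\zeta_2-\xi_2=-(\Delta_1^++\Delta_2^+)$.

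\textbf{Step 2: identify the operators.} By the definition of LCU-conjugated operators, $C_i=\bra{i}\hat{L}\ket{0}_{anc}$ is exactly the operator multiplying $\ket{i}$ in $\ket{\Psi}$. These are $A_0$ and $A_1$ as stated. The preceding lemma then shows that $\{A_0,A_1\}$ is probabilistically applied. As a consistency check, $A_0^\dagger A_0+A_1^\dagger A_1=(\hat{I}\otimes\bra{0})\hat{L}^\dagger\hat{L}(\hat{I}\otimes\ket{0})=\hat{I}$, by unitarity of $\hat{L}$.

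\textbf{Step 3: compute the probabilities.} I compute $p_i=\bra{\psi}A_i^\dagger A_i\ket{\psi}$.
\begin{itemize}
\item The diagonal terms give $c_1^2c_2^2\,\hu_0^\dagger\hu_0+s_1^2s_2^2\,\hu_1^\dagger\hu_1=c_1^2c_2^2+s_1^2s_2^2$ for $p_0$, and analogously $c_1^2s_2^2+s_1^2c_2^2$ for $p_1$.
\item The cross term in $A_0^\dagger A_0$ has coefficient $-c_1s_1c_2s_2\, e^{i(\Delta_1^++\Delta_2^+)}e^{i(\Delta_1^--\Delta_2^-)}\hu_0^\dagger\hu_1$, plus its Hermitian conjugate.
\item Since $\Delta_1^++\Delta_1^-=2\zeta_1$ and $\Delta_2^+-\Delta_2^-=2\xi_2$, this cross term equals $-c_1s_1c_2s_2(\hat{U}_{01}+\hat{U}_{01}^\dagger)$.
\item For $A_1$, the cross-term phase is $e^{i(\Delta_1^+-\Delta_2^-)}e^{i(\Delta_1^-+\Delta_2^+)}=e^{i(2\zeta_1+2\xi_2)}$, now with a plus sign.
\end{itemize}
This yields the stated $p_0$ and $p_1$. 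They sum to $(c_1^2+s_1^2)(c_2^2+s_2^2)=1$, as they must.

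The only real obstacle is the bookkeeping of phases and signs. In particular, one must verify that both cross terms carry the same phase $e^{2i(\zeta_1+\xi_2)}$, so that a single $\hat{U}_{01}$ serves both probabilities. I would confirm this by writing out the four exponents explicitly.
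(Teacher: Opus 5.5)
Your proposal is correct and follows essentially the same route as the paper: you compute $\hat{L}\ket{0}\ket{\psi}$ explicitly, read off $A_0,A_1$ as the coefficients of $\ket{0}$ and $\ket{1}$, invoke the measurement-postulate lemma, and evaluate $p_i=\bra{\psi}A_i^\dagger A_i\ket{\psi}$; your phase bookkeeping, including that both cross terms carry $e^{2i(\zeta_1+\xi_2)}$, checks out. Your extra completeness check via unitarity of $\hat{L}$ is a worthwhile addition that the paper leaves implicit.
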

The proof is given in the above calculations. We remark that $A_0$ and $A_1$ are, in general, non-unitary. The core point is: the process of measuring induces a probabilistic evolution of the target state, and this evolution can be described in terms of a non-unitary evolution. We will leverage this property to define a quantum version of a classical algorithm where the evolution is non-unitary and probabilistic, and thus showing that it can be translated into LCU-conjugated operators. The fact of knowing the measurement outcome allows us to have a resulting pure state, enabling us to apply the evolution again, which is the fundamental feature we are looking for in TM algorithms.

We started by knowing the unitary operations involved in a LCU and derived the form of the non-unitary probabilistic operations we apply through the algorithm in Fig.\ref{fig:LCU_3}, supposing we know the measurement outcome. We can also go the other way around and ask: given two non-unitary operations, can they be LCU-conjugated? Suppose we have two non-unitary operations $\{E_0,E_1\}\in\mathbb{C}^{2^n \times 2^n}$, and we want to find out if we can carry them out probabilistically with a LCU. This means to check if they can be written as LCU-conjugated operations. Thus, we give the following theorem
\begin{theorem} \label{th:1}
    Given two non-unitary operators $E_0$ and $E_1$, they are LCU-conjugated via the circuit in Fig. \ref{fig:LCU_3} if and only if
    \begin{equation}\label{eq:completeness}
        \sum_i E^{\dagger}_i E_i = \hat{I}
    \end{equation}
    and $\exists \theta_1\neq k\pi/2$ with $k\in\mathbb{Z}$, and $\exists\zeta_2,\theta_2\in \mathbb{R}$ such that
    \begin{equation}\label{eq:second_condition}
        (c_2^2-s_2^2)(E_0^{\dagger}E_0-E_1^{\dagger}E_1)+2s_2c_2(e^{-i2\zeta_2}E_0^{\dagger}E_1 + e^{i2\zeta_2}E_1^{\dagger}E_0) = (c_1^2-s_1^2) \hat{I}
    \end{equation}
    with $c_j = \cos\theta_j$ and $s_j = \sin\theta_j$.
\end{theorem}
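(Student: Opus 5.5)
The plan is to invert the linear relation between $(\hat U_0,\hat U_1)$ and $(A_0,A_1)$ given in the preceding lemma. Write $A_0=\alpha_{00}\hu_0+\alpha_{01}\hu_1$ and $A_1=\alpha_{10}\hu_0+\alpha_{11}\hu_1$, where the $2\times2$ coefficient matrix $M=(\alpha_{ij})$ is read off from the lemma. Up to the phase conventions, $M$ is the transpose of the product of the first columns of the ancilla unitaries, i.e.\ $M=\hu(\theta_2,\zeta_2,\xi_2)\,\mathrm{diag}(\ldots)\,$ applied to the first column of $\hu(\theta_1,\zeta_1,\xi_1)$. Put differently, $(A_0,A_1)^T=\hu(\theta_2,\zeta_2,\xi_2)\,\big(e^{-i\Delta_1^+}c_1\hu_0,\; e^{i\Delta_1^-}s_1\hu_1\big)^T$.

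This is the structural fact to exploit. Since $\hu(\theta_2,\zeta_2,\xi_2)$ is unitary, we can invert it:
\[
\big(e^{-i\Delta_1^+}c_1\hu_0,\; e^{i\Delta_1^-}s_1\hu_1\big)^T=\hu(\theta_2,\zeta_2,\xi_2)^{\dagger}(E_0,E_1)^T .
\]
Hence $E_0,E_1$ are LCU-conjugated if and only if there exist angles such that the two operators $F_0,F_1$ obtained from this inversion satisfy $F_0^\dagger F_0=c_1^2\hat I$ and $F_1^\dagger F_1=s_1^2\hat I$. Indeed, $F_0/c_1$ and $F_1/s_1$ are then unitary, and we define $\hu_0,\hu_1$ from them after absorbing the phases $e^{-i\Delta_1^+}$ and $e^{i\Delta_1^-}$. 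This requires $c_1,s_1\neq0$, which is the condition $\theta_1\neq k\pi/2$. Conversely, if $c_1s_1=0$, one of the unitaries is absent and the statement degenerates, so the nondegenerate case is the one to treat.

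Next, I translate the two equations on $F_0,F_1$ into conditions on $E_0,E_1$. Their sum is $F_0^\dagger F_0+F_1^\dagger F_1=E_0^\dagger E_0+E_1^\dagger E_1$, by unitarity of $\hu(\theta_2,\ldots)^\dagger$ acting on the "vector" of operators. So the sum condition is exactly the completeness relation $\sum_iE_i^\dagger E_i=\hat I$, using $c_1^2+s_1^2=1$. Their difference is $F_0^\dagger F_0-F_1^\dagger F_1=(c_1^2-s_1^2)\hat I$. Expanding $F_0=e^{i(\zeta_2+\xi_2)}c_2E_0+e^{-i(\zeta_2-\xi_2)}s_2E_1$ and $F_1=-e^{i(\zeta_2-\xi_2)}s_2E_0+e^{-i(\zeta_2+\xi_2)}c_2E_1$, the phases $\xi_2$ cancel in the sesquilinear forms. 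What remains is $(c_2^2-s_2^2)(E_0^\dagger E_0-E_1^\dagger E_1)+2s_2c_2(e^{-i2\zeta_2}E_0^\dagger E_1+e^{i2\zeta_2}E_1^\dagger E_0)$, which is precisely condition \eqref{eq:second_condition}. The sum and difference conditions together are equivalent to the pair $F_0^\dagger F_0=c_1^2\hat I$, $F_1^\dagger F_1=s_1^2\hat I$, which gives both directions.

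The main obstacle is bookkeeping rather than concept. One must verify that the coefficient matrix extracted from the lemma really equals $\hu(\theta_2,\zeta_2,\xi_2)$ acting on $(e^{-i\Delta_1^+}c_1,\,e^{i\Delta_1^-}s_1)$, with signs matching $A_0$ and $A_1$. One must then carry the phases through the expansion, so that the cross term appears with $e^{\mp i2\zeta_2}$ rather than some other combination of angles; the free phase parameters $\zeta_1,\xi_1,\xi_2$ are absorbed into the definitions of $\hu_0,\hu_1$. I would first check the sign conventions numerically in the scalar case ($n=0$) before writing the general argument.
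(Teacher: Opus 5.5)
Your proposal is correct and follows the paper's proof. The paper's $V_0,V_1$ are exactly your $F_0/(e^{-i\Delta_1^+}c_1)$ and $F_1/(e^{i\Delta_1^-}s_1)$, and imposing $V_i^\dagger V_i=\hat{I}$ gives the two conditions; your remark that $\hu(\theta_2,\zeta_2,\xi_2)^\dagger$ acts unitarily on the pair $(E_0,E_1)$ makes this transparent as the sum and difference of $F_0^\dagger F_0=c_1^2\hat{I}$ and $F_1^\dagger F_1=s_1^2\hat{I}$.

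One loose end, which the paper also leaves implicit, is necessity when the realizing circuit has $c_1s_1=0$. Do not dismiss this as degenerate; it needs one line. In that case $E_0=aW$ and $E_1=bW$ for a single unitary $W$ with $|a|^2+|b|^2=1$. The second condition then reduces to a scalar equation, which can be satisfied with $\theta_1=\pi/4$ by a suitable choice of $\theta_2$ and $\zeta_2$.
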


\begin{proof}
We start by considering $\theta_1\neq k\pi/2$ with $k\in\mathbb{Z}$, and $\xi_i,\zeta_i,\theta_2\in \mathbb{R}$, and we define
\begin{align}
    V_0 & = \frac{e^{i(\xi_1+\xi_2+\zeta_1)}}{\cos{\theta_1}} [e^{i\zeta_2}\cos{\theta_2}E_0 + 
    e^{-i\zeta_2}\sin{\theta_2}E_1 ] \label{eq:U0}\\
    V_1 & = -\frac{e^{i(\xi_1-\xi_2-\zeta_1)}}{\sin{\theta_1}} [e^{i\zeta_2}\sin{\theta_2} E_0 - e^{-i\zeta_2}\cos{\theta_2} E_1] \label{eq:U1}
\end{align}
These are, in general, non-unitary. We can now impose the unitarity of $V_0$ and $V_1$, thus asking $V_0^{\dagger}V_0=I$ and $V_1^{\dagger}V_1=I$. These two constraints can be explicitly calculated, and they result in Eqs.\ref{eq:completeness},\ref{eq:second_condition}. Thus, we can conclude that if $E_0$ and $E_1$ satisfy Eqs.\ref{eq:completeness}\ref{eq:second_condition}, then $V_0$ and $V_1$ in Eqs.\ref{eq:U0},\ref{eq:U1} are unitary. We can now express $E_0$ and $E_1$ in terms of $V_0$ and $V_1$. In doing so, we obtain the expressions in Eqs.\ref{eq:nu_condition_1},\ref{eq:nu_condition_2} with $\hat{U}_0=V_0$ given in Eq.\ref{eq:U0}, and $\hat{U}_1=V_1$ given in Eq.\ref{eq:U1}. These expressions were coming directly from the algorithm in Fig.\ref{fig:LCU_3}, and are then LCU-conjugated. Thus, we can conclude that if two non-unitary operations respect Eqs.\ref{eq:completeness}, \ref{eq:second_condition}, then they are LCU-conjugated via the circuit in Fig.\ref{fig:LCU_3}.
\end{proof}
The necessity is confirmed by the above calculations. In other words, if $C_0$ and $C_1$ satisfy the desired conditions, they can be probabilistically applied to a target quantum state via a LCU circuit. In the specific case of an H-LCU (where Hadamard gates are applied to the ancilla before and after the controlled operations), the condition Eq.\ref{eq:second_condition} simplifies to:
\begin{equation} \label{eq:pseudo_H}
    C_0^{\dagger}C_1 + C_1^{\dagger}C_0 = 0
\end{equation}
This motivates the following definition:

\begin{definition}
    Two square matrices $A, B \in \mathbb{C}^{n \times n}$ are \textbf{pseudo-commutative} if $A^\dagger B + B^\dagger A = 0$.
\end{definition}

Thus, in an H-LCU, the non-unitary operators being probabilistically applied must be pseudo-commutative and form a complete set. We refer to Eq. \ref{eq:second_condition} as the \textbf{conditional pseudo-commutativity condition}. We remark that while completeness depends only on the LCU-conjugated operations, conditional pseudo-commutativity depends also on the specific ancillary rotations. 

Our claim then is that if two operators are LCU-conjugated, then they can be part of a non-unitary evolution in a TM algorithm. In fact, after the measurement, it is possible to apply the same operation again, keeping in mind that the quantum state will be one of Eqs.\ref{eq: post-meas}. If two operators cannot be LCU-conjugated, then they do not satisfy Theorem\ref{th:1}, and if two operators do not satisfy Theorem\ref{th:1}, then they cannot be LCU-conjugated via the circuit in Fig.\ref{fig:LCU_3}. It is still possible to carry them out with a LCU, but this will generally produce an undesired state, which is an unfavourable feature for TM algorithms, and needs to be treated with amplitude amplification strategies.

We point out that the setting we used is not the only possible one. In fact, we could generalize to POVM measurements, for example. Furthermore, Theorem \ref{th:1} is specific to the circuit in Fig. \ref{fig:LCU_3} and to measuring in the computational basis, which can further be generalized. In the following, we see how the properties we derived can be used to design the collision operator of a quantum lattice gas automata for Burger's equation.

\section{Results}\label{sec:appli}

The previous section established general conditions under which non-unitary operators can be probabilistically applied to a target qubit register. We now investigate whether these conditions are satisfied by existing probabilistic numerical algorithms. We first consider a lattice gas cellular automaton for Burgers' equation, which provides a positive example, before later examining a randomly sampled finite-difference discretization for which this framework allows us to assess the limitations of the encoding adopted.

\subsection{Quantum lattice gas algorithm for Burgers' equation}

Lattice gas cellular automata (LGCA) are among the classical numerical methods capable of simulating PDEs that are natively probabilistic. In particular, there is one LGCA defined in \cite{boghosian2019cellular} which is capable of simulating 1D Burgers' equation with a set of operators that can be LCU-conjugated, if we allow a slight modification of the relative phase. This modification is consistent, however, with the encoding and with the consequent retrieval of information.

The LGCA proposed in \cite{boghosian2019cellular} is characterized as follows. We consider a 1D lattice consisting of $N$ sites. In each site, there are two bits $[b_-,b_+]$ representing the presence of a right-moving (left-moving) particle with $b_+$ ($b_-$) in the corresponding site. An exclusion principle holds; thus, in each site there can be at most one particle per velocity. This system undergoes an evolution consisting of a collision process and a streaming process. The collision process changes the state of each cell synchronously. Its truth table is Table \ref{table:1}.

\begin{table}[h!]
\centering
\begin{tabular}{c|c|c|cc}
$b_-(x,t)$ & $b_+(x,t)$ & $b_-'(x,t)$ & $b_+'(x,t)$   \\ \hline 
0    & 0    & 0  & 0  \\
0    & 1    & $\frac{1-\alpha}{2}$ & $\frac{1+\alpha}{2}$   \\
1    & 0    & $\frac{1-\alpha}{2}$ & $\frac{1+\alpha}{2}$   \\
1    & 1    & 1 & 1 
\end{tabular}
\caption{Truth table of the collision process. $b_i(x,t)$ is the precollision state at time $t$ of the cell at site $x$, $b_i'(x,t)$ is the postcollision state. $\alpha$ is a random variable valued 1 or -1.}
\label{table:1}
\end{table}

In this process, $\alpha$ can be either 1 with probability $p$ or -1 with probability $1-p$. 
As we can see from Table \ref{table:1}, the process is not reversible, thus cannot be represented with a unitary operator. The streaming step consists of the particles moving according to their velocity. An example of evolution is represented in Fig.\ref{fig:boghosian}
\begin{figure}[ht]
    \centering
    \includegraphics[scale=0.85]{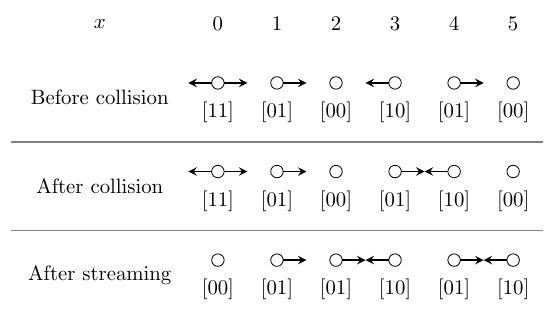}
    \caption{Example of classical evolution \cite{boghosian2019cellular}. The collision takes place  at $x=1,3,4$, with different random extractions for $\alpha$. All the other cells are not affected by the collision. The streaming takes place according to respective velocities with continuous boundary conditions.}
    \label{fig:boghosian}
\end{figure}
It is proven in \cite{boghosian2019cellular} that this system collectively behaves according to Burgers' equation. We provide in Fig.\ref{fig:sims} some numerical results for visualizing it

\begin{figure}
    \centering
    \includegraphics[width=0.48\linewidth]{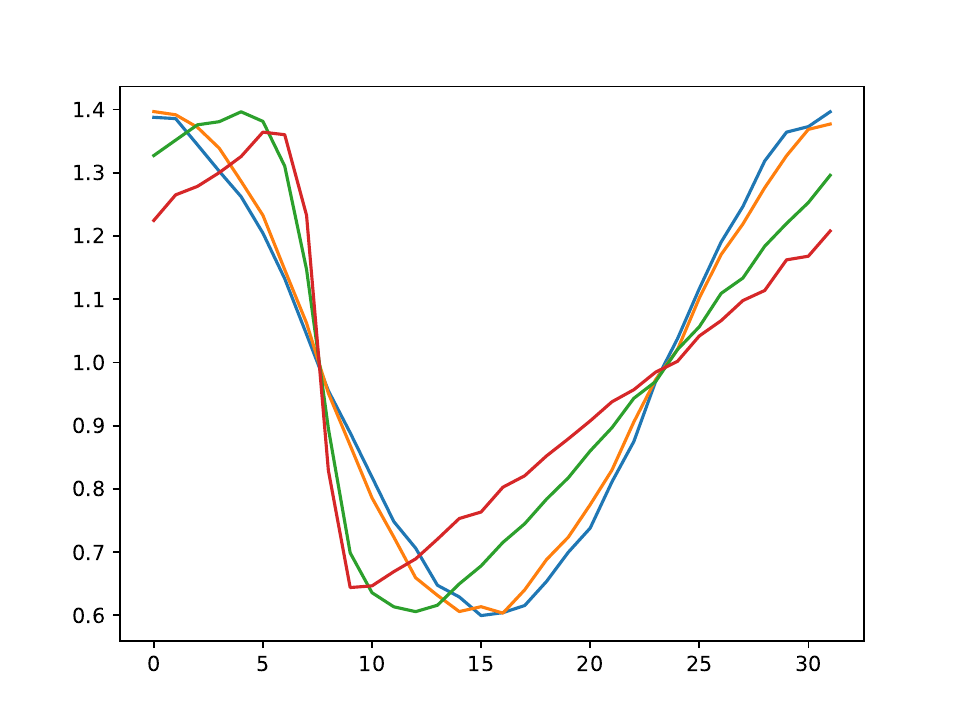}
    \includegraphics[width=0.48\linewidth]{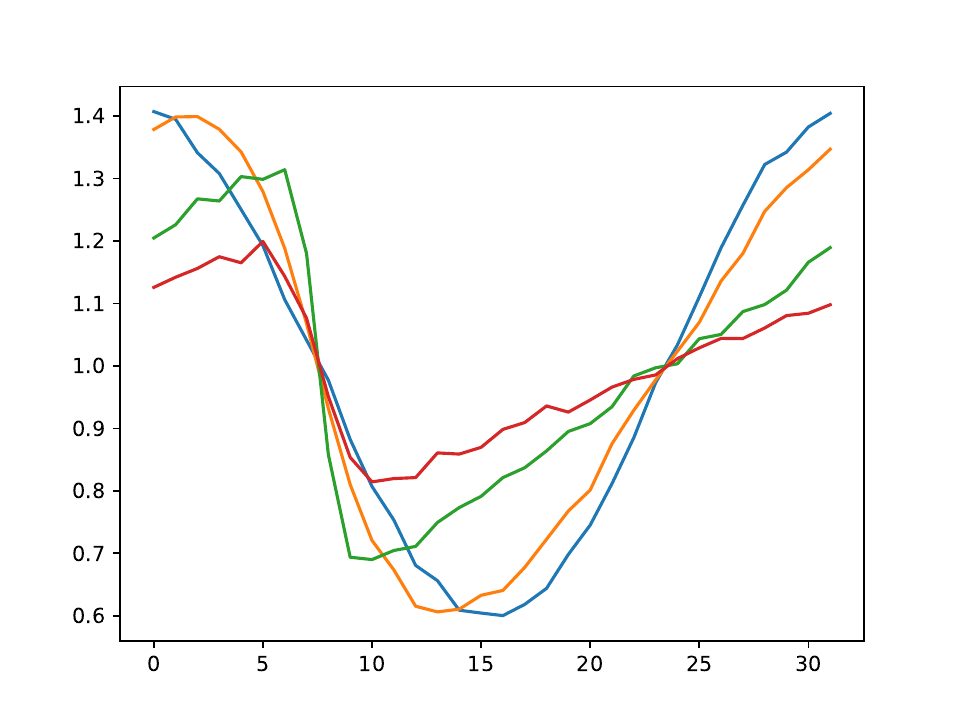}
    \caption{Simulation of Burger's equation with LGCA. 2048 grid points were used, ensemble-averaged over 100 realizations and space-averaged over 64 grid points. The left plot has $p=0.75$, the right plot has $p=0.5$. The specific dependence of the hydrodynamic parameters is partly detailed in \cite{boghosian2019cellular}.}
    \label{fig:sims}
\end{figure}


We now give the quantum version of the algorithm. First, we consider the computational basis encoding for our algorithm, thus having 
\begin{equation}
    [b_-b_+]\leftrightarrow\ket{b_-b_+}.
\end{equation} 
The encoding of the occupation state is crucial for the design of the collision operation. The overall complexity is dominated by the total encoding of the lattice, which is discussed later.



For the collision we need to apply the operators $C_0$ and $C_1$ with probabilities $p$ and $1-p$, being

\begin{align}
    C_0=\begin{pmatrix}
        1 & 0 & 0 & 0 \\
        0 & 1 & 1 & 0 \\
        0 & 0 & 0 & 0 \\
        0 & 0 & 0 & 1
    \end{pmatrix}
     & &
     C_1=\begin{pmatrix}
        1 & 0 & 0 & 0 \\
        0 & 0 & 0 & 0 \\
        0 & 1 & 1 & 0 \\
        0 & 0 & 0 & 1
    \end{pmatrix}
\end{align}
We consider the first case where $p=1/2$. These operators are clearly non-unitary, and they do not respect the LCU-conjugated condition for H-LCU. In fact, we can calculate directly that the completeness condition is not respected.
However, we can do a slight modification and consider a relative phase in the outgoing states. In particular, we multiply by a different phase $e^{i\alpha_j}$ the $j$-th column of $C_0$,$e^{i\beta_j}$ the $j$-th column of $C_1$. We can do this because, for the LGCA, the phase is not important. With the computational basis encoding adopted, the information is in the quantum state, while the phase is relatively unimportant, since a measurement of the cell would bring the same correct result anyway. In this sense, the encoding is phase-independent. From the completeness condition Eq. \ref{eq:completeness}, we have
\begin{equation} \label{eq:rel_phases_cond_comp}
    \alpha_1+\beta_2=\alpha_2+\beta_1+\pi
\end{equation}
and from the pseudo-commutativity in Eq. \ref{eq:pseudo_H} we have
\begin{equation}\label{eq:rel_phases_cond_pseudo}
    \begin{split}
        \alpha_0 &= \beta_0+\pi/2 \\
        \alpha_3 &= \beta_3+\pi/2
    \end{split}
\end{equation}
Then, a possible choice of parameters satisfying these two conditions gives the following operators
\begin{align}
    C_0'=\begin{pmatrix}
        1 & 0 & 0 & 0 \\
        0 & 1 & 1 & 0 \\
        0 & 0 & 0 & 0 \\
        0 & 0 & 0 & 1
    \end{pmatrix}
     & &
    C_1'=\begin{pmatrix}
        i & 0 & 0 & 0 \\
        0 & 0 & 0 & 0 \\
        0 & 1 & -1 & 0 \\
        0 & 0 & 0 & i
    \end{pmatrix}
\end{align}
These are H-LCU-conjugated, and in particular we can find $\hat{U}_0$ and $\hat{U}_1$ resulting in 
\begin{align}
    \hat{U}_0=\frac{1}{\sqrt{2}}\begin{pmatrix}
        1+i & 0 & 0 & 0 \\
        0 & 1 & 1 & 0 \\
        0 & -1 & 1 & 0 \\
        0 & 0 & 0 & 1+i
    \end{pmatrix}
     & 
     \hat{U}_1=\frac{1}{\sqrt{2}}\begin{pmatrix}
        1-i & 0 & 0 & 0 \\
        0 & 1 & 1 & 0 \\
        0 & 1 & -1 & 0 \\
        0 & 0 & 0 & 1-i
    \end{pmatrix}
\end{align}
This proves that this operation can be carried out with the quantum circuit in Fig.\ref{fig:LCU_3} in a TM fashion, without any postprocessing and with an unconditional certainty of obtaining the desired state. Since the algorithm is probabilistic, at each time step we probabilistically apply one of the operators $C_i'$, guaranteeing the correct evolution of the state.

H-LCU allows, however, for applying these operations with probability $1/2$. To apply them with arbitrary probabilities, we need to satisfy Eq. \ref{eq:second_condition} instead of the simplified Eq. \ref{eq:pseudo_H} for using Hadamard gates. The completeness condition holds in any case because it is independent of the operation on the ancilla; thus we still have Eq. \ref{eq:rel_phases_cond_comp}. For Eq. \ref{eq:second_condition} we have in the first place that $\cos(2\theta_1)=0$, thus $\theta_1=\pi/4$ up to a periodic term. This condition, however, brings the probabilities $p_i$ of Eq. \ref{eq:probs} to be the same in the simplified case of pseudo-commutative unitaries. Thus, we conclude it is not possible to carry out the evolution with different probabilities.

In this section, we showed how LCU followed by a measurement can be used to implement, with unconditional success probability, a probabilistic algorithm. The properties previously derived allowed us to design ad hoc operators and adapt the prescribed encoding, making it possible to carry out the collision step of a LGCA, avoiding postprocessing. 

\subsection*{Complexity}
If we want to consider a 1D lattice consisting of $N$ cells, then we can have different encodings of the entire lattice, always using computational basis encoding. For simplicity, we consider having two qubits in each cell $\ket{q_-}\otimes\ket{q_+}$. The total lattice state is 
\begin{equation}\label{eq:encoding_lattice}
    \ket{\Psi(t)}=\bigotimes_x\ket{q_-(x,t)q_+(x,t)}.
\end{equation} 

Other encodings, such as space-time encoding, can be considered \cite{georgescu2025fully}. With the encoding as in Eq.\ref{eq:encoding_lattice}, we carry out the collision in each site independently and in parallel, with a complexity of $O(1)$, while the streaming corresponds to a series of SWAPs, also applicable in parallel, resulting in $O(2)$. This means an asymptotic time complexity of $O(T)$ for simulating $T$ steps. With this encoding, however, we need at least 2 qubits per site, thus having a space complexity of $\Omega(2N)$. Considering the worst-case scenario of one ancilla per site, the space complexity is $O(3N)$. This configuration is, generally, non-advantageous.

We remark, however, that the proposed procedure works with any quantum algorithm that uses computational basis encoding; thus, its effectiveness is independent of the encoding of the entire lattice. As a future perspective, the study of fundamental limitations to an advantageous encoding of the entire lattice could allow for a quantum advantage \cite{fonio2025quantum}, thus making our procedure a fundamental step for quantum nonlinear PDEs solvers.

In the following, we analyze the possibilities of a probabilistic version of Euler's scheme for the advection equation.



\subsection{Finite difference method (FDM)}

A quantum implementation of FDM is at the core of several quantum algorithms. There are different possible strategies to implement it. For example, the differential operator can be cast into an anti-Hermitian operator to give rise to a unitary dynamics for convection problems \cite{zylberman2026trotter}. Alternatively, for TM algorithms, a quantum representation of FDM was given in \cite{over2025quantum}, where they achieved the success probability independent of the number of time steps for the advection-diffusion equation. Seeing these alternatives, we could ask if LCU-conjugated operations can be useful for implementing a FDM scheme. We give in the following a first naive scheme, proving that the properties we obtained show how it is not possible to implement a random first-order Euler scheme for the advection equation, thus proving their possible use as design principles of stochastic quantum algorithms.

We consider the 1D example of a variable $f(x,t)$ which can be represented in a vector $\mathbf{f}(t)=[f(0,t),f(1,t),\cdots,f(N,t)]$ if we consider $N=2^n$ gridpoints. We consider an amplitude encoding

\begin{equation} \label{eq:encoding_fdm}
    \ket{\psi(t)}=\alpha \sum_x f(x,t)\ket{x}
\end{equation}
This is a very common encoding, where the lattice is represented in a superposition state, whose coefficients correspond to the normalized variable values. 

We consider the simulation of a simple advective PDE such as

\begin{equation} \label{eq:advection}
    \partial_tf(x,t)=\alpha\partial_x f(x,t)
\end{equation}
Solving this equation with FDM means writing a differential operator $\mathbf{D}$ for the partial derivative with respect to $x$, and computing $\mathbf{f}(t+1) = (\mathbf{I}+\mathbf{D})\mathbf{f}(t)$. We can use for this purpose the backward operator $\mathbf{D_b}$, the forward operator $\mathbf{D_f}$, or the central operator $\mathbf{D_c}$. Their form depends on the boundary conditions: we verified the following results for periodic BC, when they are defined as follows. 
\begin{equation}
\begin{split}
& \mathbf{D_f}= \frac{1}{h}
\begin{bmatrix}
-1 & 1 & \cdots & 0 \\
0 & -1 & \cdots & 0 \\
\vdots & \ddots & \ddots & \vdots \\
1 & 0  & 0 & -1
\end{bmatrix}
\mathbf{D_b} = \frac{1}{h}
\begin{bmatrix}
-1 & 0 & \cdots & 1 \\
1 & -1 & \cdots & 0 \\
\vdots & \ddots & \ddots & \vdots \\
0 & \cdots & 1 & -1
\end{bmatrix} \\
&\mathbf{D_c} = \frac{1}{2}(\mathbf{D_f}+\mathbf{D_b})
\end{split}
\end{equation}
These differential operators are non-unitary; thus, if we want to calculate the updated $\ket{\psi(t+1)}$, we need to carry out a non-unitary operation. This can be done, as we said, with a block-encoding, which, however, is not feasible for multi-time-step simulations. At this point, we can check if we can formulate the TM evolution with LCU-conjugated operators. In this sense, we could consider applying $C_0=\mathbf{D}_i$ and $C_1=\mathbf{D}_j$ with $i\neq j$ for different ancillary states. In this way, we could calculate the space derivative in any case, but probabilistically applying a non-unitary operation.

It is straightforward, thanks to the conditions we derived in Eqs.\ref{eq:completeness},\ref{eq:second_condition}, to realize that no combination of $\mathbf{D}_i$ and $\mathbf{D}_j$ is LCU-conjugated. This means that we cannot evolve the amplitudes of Eq.\ref{eq:encoding_fdm} such that they reproduce the prescribed classical FDM scheme in a TM quantum computation using a simple LCU. This shows how useful the conditions of Theorem \ref{th:1} are to assess the validity of a possible encoding for TM algorithms. 

At this point, we could try to modify the non-unitary operators as in the previous case. However, this is not straightforward since the chosen encoding for FDM is not phase-independent. A relative phase on one of the terms of the finite difference operation changes the derivative we are applying.

We notice that FDM is not natively probabilistic, but we were still able to define it in a probabilistic way to try to check for LCU-conjugated operations. For quantum lattice Boltzmann methods (QLBM), this is actually not straightforward, and we leave it as a future perspective.

\section{Discussion}

In this article, we introduced LCU-conjugated operations, a class of non-unitary operations that are probabilistically applied to a target quantum state through an LCU procedure followed by measurement. We derived two fundamental properties using a single ancilla qubit, thereby characterizing the class of non-unitary operations that can be implemented in this way and providing a practical framework for the design of new quantum algorithms.

As a first application, we developed the quantum counterpart of a classical lattice gas cellular automaton for Burgers' equation. This demonstrates that LCU-conjugated operations enable the construction of a quantum algorithm with unconditional success probability, capable of simulating nonlinear dynamics without requiring multiple copies of the quantum state, unlike existing approaches. At the same time, we showed that an efficient quantum implementation remains limited by the lack of an efficient encoding for LGCA, which therefore remains an open problem.

As a second application, we showed that a straightforward probabilistic version of Euler's finite-difference method cannot be implemented using LCU-conjugated operations, demonstrating the need for an alternative formulation. More broadly, this example illustrates how the properties derived in this work provide a systematic tool for assessing the feasibility of probabilistic quantum algorithms. In this sense, LCU-conjugated operations may serve as fundamental building blocks in the design of future quantum algorithms, particularly for linear and nonlinear partial differential equations and other time-marching schemes.

The most promising perspective of this work lies in quantum algorithm design. In particular, the results presented here identify features of the LCU framework that can be directly exploited in the construction of quantum time-marching algorithms. This opens a possible route toward quantum algorithms for nonlinear PDEs and suggests a broader role for probabilistic quantum algorithms in this context. More generally, the proposed framework could be applied to any probabilistic time-marching formulation, including Monte Carlo methods, other lattice gas cellular automata, and classical probabilistic algorithms for nonlinear partial differential equations.

\backmatter

\section*{Acknowledgements}
This work was partly supported by the PEPR EPiQ ANR-22-PETQ-0007, and ANR JCJC DisQC ANR-22-CE47-0002-01

\section*{Statements and Declarations}

\noindent\textbf{Competing interests.}
The authors declare that they have no competing interests.

\bibliography{sn-bibliography}

\end{document}